\newtheorem{theorem}{Theorem}
\newtheorem{assumption}{Assumption}
\newtheorem{corollary}{Corollary}
\newenvironment{manualtheorem}[1]{%
  \manualtheoreminner
}{\endmanualtheoreminner}
\def\@xfootnote[#1]{%
  \protected@xdef\@thefnmark{#1}%
  \@footnotemark\@footnotetext}
\def\titlefootnote{\ifx\protect\@typeset@protect\expandafter\footnote\else\expandafter\@gobble\fi}
\begin{document}
    
\begin{center}
	{\Large \textbf{Global Representation of the Conditional LATE model: \\A Separability Result}}

	\vspace{2ex}
    Yu-Chang Chen and Haitian Xie\footnote[$\asterisk$]{Alphabetical ordering; both authors contributed equally to this work. For helpful comments and suggestions, we would like to thank Yixiao Sun, Kaspar W\"uthrich, and two anonymous referees.}
    
    \vspace{1ex}
	Department of Economics, UC San Diego\footnote[\textdagger]{Email: yuc391@ucsd.edu, hax082@ucsd.edu. Correspondence to: Department of Economics, University of California, San Diego, 9500 Gilman Drive, La Jolla, CA 92093-0508.}

	\vspace{1ex}
	\today
\end{center}


\vspace{5ex}

\begin{abstract}
    
    This paper studies the latent index representation of the conditional LATE model, making explicit the role of covariates in treatment selection. We find that if the directions of the monotonicity condition are the same across all values of the conditioning covariate, which is often assumed in the literature, then the treatment choice equation has to satisfy a separability condition between the instrument and the covariate. This global representation result establishes testable restrictions imposed on the way covariates enter the treatment choice equation. We later extend the representation theorem to incorporate multiple ordered levels of treatment.

	\vspace{5ex}



	\noindent \textbf{JEL Classification: } C21, C50

	\vspace{2ex}

	\noindent \textbf{Keywords: } Local instrumental variables, Latent index representation, Conditioning covariates, Monotonicity, Separability.

\end{abstract}

\section{Introduction}

Self-selection into treatment is a common challenge in causal inference. One approach, pioneered by \cite{heckman1976common}, is to impose a model on the selection process. Another approach is to invoke the assumptions of \cite{imbens1994identification} and use instrumental variables to identify the local average treatment effect (LATE). \cite{vytlacil2002independence} shows that the two approaches are equivalent, even though the LATE approach does not provide an explicit model of the selection process. Specifically, \cite{vytlacil2002independence} finds that the monotonicity and independence conditions imposed in \cite{imbens1994identification} together imply a nonparametric binary choice model, in which the instrument and the unobserved heterogeneity are additively separable in the latent index. When conditioning covariates are included, say, for instrument validity, the selection model representation can be established on a given value of covariates. Namely, holding fixed the value of covariates, imposing a (nonparametric) selection model is no stronger than imposing the LATE assumptions.

However, in most empirical settings, a fully nonparametric analysis, conditioning on each value of the covariates, is prohibitively data-demanding. It is typical to pool observations with different characteristics and to incorporate covariates in the selection model (for example, \citealp{carneiro2011estimating} and \citealp{cornelissen2018benefits}). 
These empirical works conduct global analysis that explicitly models covariates while
the theoretical analysis of \cite{vytlacil2002independence} is local in the sense that covariates are fixed at a constant level.\footnote{The terminology ``local'' also appears in the work by \cite{dahl2020nevertoolate} with a different meaning. They consider the weakening of the LATE assumption based on the outcome distributions rather than the covariates. } This paper aims at filling this gap.\footnote{In a recent paper, \cite{kline2019heckits} compares the selection model and the LATE approach when covariates are present. While they have established that, in the absence of covariates, the selection model and the LATE approach will yield numerically identical estimates, they also point out that their equivalence result does not hold when the covariates are introduced at least in the way covariates are usually modeled in empirical works. This paper addresses the same issue by characterizing the set of selection models that are equivalent to the LATE model when covariates are present.}


Our result extends the representation in \cite{vytlacil2002independence} to settings where the covariates are not held fixed. We show that the conditional LATE (CLATE) model \citep{abadie2003semiparametric} has a threshold-crossing representation in which the instruments are separated from the covariates in the latent index. Loosely speaking, the separability is a result of the monotonicity condition in the CLATE model that requires the instruments to affect the potential treatment status in the same direction for all individuals.\footnote{\cite{heckman2005structural} suggests renaming the monotonicity condition as uniformity because ``it is a condition across people than the shape of a function for a particular person.'' } In particular, the direction of the monotonicity condition is the same across all values of the covariates, and thus a latent index representation that separates out the covariates is appropriate in this case.  

The separability result implies that it is possible to uniformly rank the instruments' values by the propensity score across the covariates' values. Such a ranking has two practical usages. First, it can be used as a single index to construct testable implications for the CLATE model. Second, we can use the ranking to relabel the values of the instruments so that an increase in the ranking never causes individuals to drop the treatment, regardless of their covariates values. Our representation result also implies that the techniques developed in the conditional LATE (CLATE) framework, such as the identification analysis in \cite{abadie2003semiparametric}, can be applied to selection models that impose separability, and vice versa. As a corollary of the representation theorem. We reformulate and refine the testable implications of \cite{heckman2005structural} for the marginal treatment effect framework.


The remaining of the paper is organized as follows. In section \ref{sec:main}, we present the global latent index representation of the LATE model, which is our main result. Section \ref{sec:implications} discusses some implications of the representation. Section \ref{sec:ordered} generalizes the result to the case of ordered-discrete choice selection models. The last section concludes.

\section{Representation Results} \label{sec:main}

We first introduce the CLATE model. Let the binary variable $D$ be the receipt of treatment so that $D=1$ denotes the treatment status, and $D=0$ denotes the untreated status. The potential outcomes under the treated and untreated status are denoted by $Y_1$ and $Y_0$, respectively. The actual outcome observed by the econometrician is $Y = DY_1 + (1-D)Y_0$. Let $X$ be a random vector containing variables that could potentially affect both the outcome and the treatment choice. The covariates are introduced in the model to make the validity of the instruments plausible. Denote $\mathcal{X}$ as the support of $X$. Let the random vector $Z$ be the collection of variables that affect the treatment choice $D$ but not the potential outcomes. These variables are referred to as instruments or excluded variables. Note that under this specification, $Z$ and $X$ are disjoint sets of variables. Denote $\mathcal{Z}$ as the support of $Z$. For each value $z$ of the instrument, let $D_z$ be the counterfactual treatment status if $Z$ were externally set to $z$. The realized treatment can be represented as $D = D_Z = \sum_{z \in \mathcal{Z}} \mathbf{1}\{Z=z\}D_z$.


To avoid measure-theoretic technicalities, we assume both $\mathcal{Z}$ and $\mathcal{X}$ are countable. We further assume that for any $x \in \mathcal{X}$, $P(D=1 \mid X=x) \in (0,1)$ and $P(D=1 \mid Z=z,X=x)$ is not a trivial function of $z$.
This means that there exist both treated and untreated individuals, given each value of the covariates value. This assumption is also imposed in \cite{vytlacil2002independence}. The assumptions of the CLATE model are listed as follows.

\begin{assumption} [Conditional Independence] \label{ass:CI}
    $\left(  \{D_z: z \in \mathcal{Z} \},Y_1,Y_0\right) \perp Z \mid X $ .
\end{assumption}

\begin{assumption} [Monotonicity] \label{ass:monotonicity}
    For any $(z,z') \in \mathcal{Z}^2$, either 
    $$ P(D_z \geqslant D_{z'} \mid X=x) =1 \text{, for almost all } x$$
    or 
    $$ P(D_z \leqslant D_{z'} \mid X=x) =1 \text{, for almost all } x.$$\end{assumption}

Assumption \ref{ass:CI} requires the instrument to be ``as good as randomly assigned" conditional on the covariates. Assumption \ref{ass:monotonicity} is the monotonicity condition that is typically required in the LATE literature. Together, the two assumptions form the CLATE framework. Note that the exclusion restrictions of the instrument on the outcome is already embedded in the notation of the potential outcomes. 

We discuss the monotonicity condition in more detail. This condition is global as it requires the direction of monotonicity to be the same across different values of $x$. A weaker and conditional version of monotonicity would be to impose, for any $(z,z') \in \mathcal{Z}^2$, and for each $x$ locally, either
\begin{align} \label{eqn:lmc}
    P(D_z \geq D_{z'} \mid X=x) = 1 \text{, or } P(D_z \leq D_{z'} \mid X=x) = 1.
\end{align}
For any $x \in \mathcal{X}$, we can consider the individual with $P(D_z > D_{z'} \mid X=x) = 1$ as the complier and the individual with $P(D_z < D_{z'} \mid X=x) = 1$ as the defier. Then under the local monotonicity condition (\ref{eqn:lmc}), it is possible that for some $x$, there are compliers but no defier; while for other $x$, there are defiers but no complier. This notion of local monotonicity can be found, for example, in \cite{kolesar2013estimation} and \cite{sloczynski2020should}. However, it is important to have uniformity in the direction of monotonicity in order to obtain the separability result in the global representation. 

The main result of this paper is Theorem \ref{thm:representation}.

\begin{theorem} [Latent Index Representation of CLATE] \label{thm:representation}
    The following two representations are equivalent.
    \begin{enumerate} [label = (\roman*)]
        \item The CLATE model (Assumptions \ref{ass:CI} and \ref{ass:monotonicity}).
        \item There exist functions $m$ and $q$, and a random variable $U$ such that $(Y_1,Y_0,U) \perp Z \mid X$ and the treatment choice is determined by
        \begin{align} \label{eqn:index_crossing}
            D_z = \mathbf{1}\{ m(z) \geqslant q(X,U) \}  \; w.p.1.
        \end{align}
    \end{enumerate}
	Furthermore, if the conditional distribution of $U \mid X=x$ is absolute continuous for all $x \in \mathcal{X}$,\footnote{This assumption is typically imposed in the marginal treatment effect literature (for example, \citealp{heckman2005structural}) for the normalization of $U$.}  then there exist a function $q^*$ and a random variable $U^* \sim \text{Unif}[0,1]$ such that $U^* \perp (Z,X)$ and   
\begin{align} \label{eqn:representation_uncond}
            D_z = \mathbf{1}\{ m(z) \geqslant q^*(X,U^*) \}  \; w.p.1.
        \end{align}
\end{theorem}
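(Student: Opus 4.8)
The plan is to prove the two implications of the equivalence separately and then obtain the normalization as a byproduct of the construction via a probability integral transform. I would dispatch (ii) $\Rightarrow$ (i) first, as it is immediate. Writing $D_z=\mathbf 1\{m(z)\ge q(X,U)\}$ exhibits the entire response vector $\{D_z\}_{z\in\mathcal Z}$ as a deterministic function of $(X,U)$, so the hypothesized exclusion $(Y_1,Y_0,U)\perp Z\mid X$ transfers to $(\{D_z\},Y_1,Y_0)\perp Z\mid X$, which is Assumption~\ref{ass:CI}; and for any $z,z'$ with $m(z)\ge m(z')$ the event $\{m(z')\ge q(X,U)\}$ is contained in $\{m(z)\ge q(X,U)\}$, so $D_z\ge D_{z'}$ holds for every realization, giving the global monotonicity of Assumption~\ref{ass:monotonicity}.

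The substantive direction is (i) $\Rightarrow$ (ii), and its first and most important step is a rank-invariance lemma. Writing $p(z,x)=P(D_z=1\mid X=x)$, I would show that global monotonicity forces the ordering of instrument values by propensity score to be the same for every $x$: for $z\ne z'$, Assumption~\ref{ass:monotonicity} gives either $D_z\ge D_{z'}$ or $D_z\le D_{z'}$ almost surely, and relevance ($p(z,x)\ne p(z',x)$ for all $x$) rules out equality, so exactly one of the two holds and it fixes the sign of $p(z,x)-p(z',x)$ \emph{simultaneously} for all $x$. Transitivity of almost-sure inequalities then makes $z\preceq z'\iff D_z\le D_{z'}$ a.s. a strict total order on the countable set $\mathcal Z$, along which $p(\cdot,x)$ is strictly increasing for each $x$. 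This single common order is exactly what lets the instrument enter through one covariate-free index, so I would fix any strictly $\preceq$-increasing function $m:\mathcal Z\to\mathbb R$.

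With $m$ fixed, I would construct a latent threshold $V$ with $D_z=\mathbf 1\{m(z)\ge V\}$ a.s. and $(Y_1,Y_0,V)\perp Z\mid X$, after which $U:=V$ (and $q(X,U):=U$) satisfies (ii). Monotonicity guarantees that for each unit $\{z:D_z=1\}$ is an upper set in $\preceq$, i.e.\ the response vector is a threshold rule, so $V$ must lie in the gap between $\sup\{m(z):D_z=0\}$ and $\inf\{m(z):D_z=1\}$. To turn this into a genuine random variable whose conditional law reproduces the propensities, $F_{V\mid X=x}(m(z))=p(z,x)$, I would (conditional on $X=x$) allocate the probability of each response type to its gap interval and randomize within the interval using an auxiliary uniform independent of everything else; the interval masses match the type probabilities because monotonicity gives $P(D_{z'}=1,D_z=0\mid X=x)=p(z',x)-p(z,x)$ for adjacent $z\prec z'$, which is precisely the conditional complier share. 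As $V$ is then a function of $\{D_z\}$, $X$, and the independent randomizer, Assumption~\ref{ass:CI} yields $(Y_1,Y_0,V)\perp Z\mid X$.

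For the ``furthermore'' claim I would normalize by the conditional quantile transform. If $U\mid X=x$ is absolutely continuous, its conditional CDF $F_{U\mid X}(\cdot\mid x)$ is continuous, so $U^\ast:=F_{U\mid X}(U\mid X)$ is $\mathrm{Unif}[0,1]$ given every $x$, hence marginally uniform and independent of $X$, while $q^\ast(X,U^\ast):=F_{U\mid X}^{-1}(U^\ast\mid X)$ reconstructs $U$ almost surely and so preserves the crossing indicator $D_z=\mathbf 1\{m(z)\ge q^\ast(X,U^\ast)\}$. Because $U^\ast$ is a measurable function of $(X,U)$ the conditional exclusion $(Y_1,Y_0,U^\ast)\perp Z\mid X$ is inherited, and together with $U^\ast\perp X$ it yields $U^\ast\perp(X,Z)$, the independence of the latent index required by the representation. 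I expect the main obstacle to lie in the construction step: the delicate point is not the single-$z$ marginal match but producing one latent $V$ that simultaneously reproduces the whole covariate-indexed family $\{D_z\}$ with the instrument isolated in the covariate-free index $m$, and it is exactly here that rank invariance is indispensable, since under merely local monotonicity the propensity ordering could reverse across $x$ and no covariate-free $m$ could exist. Minor care is also needed with accumulation points of $\{m(z)\}$ when $\mathcal Z$ is infinite (choosing $m$ with separated values) and with the null set on which $F_{U\mid X}$ is flat in the quantile step.
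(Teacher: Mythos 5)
Your proposal is correct, but the substantive direction (i) $\Rightarrow$ (ii) takes a genuinely different route from the paper. The paper proceeds in two citation-driven steps: it first applies \cite{vytlacil2002independence} \emph{pointwise in $x$} to obtain a local representation $D_z^x=\mathbf{1}\{g(z,x)\geqslant U_x\}$, then observes that global monotonicity forces $g(z,\cdot)-g(z',\cdot)$ to have a uniform sign, which by Lemma 1 of \cite{vytlacil2006note} factorizes $g(z,x)=h_x(m(z))$ with $h_x$ strictly increasing, yielding $q(X,U)=h_X^{-1}(U)$. You instead work globally from the start: you extract a single total order $\preceq$ on $\mathcal{Z}$ directly from the almost-sure inequalities $D_z\geqslant D_{z'}$ (using relevance for strictness and transitivity for totality), embed it via a covariate-free $m$, and then rerun Vytlacil's threshold construction once, producing a latent $V$ with $D_z=\mathbf{1}\{m(z)\geqslant V\}$ and taking $q(x,u)=u$ --- i.e., you land directly on the paper's reformulation (\ref{eqn:representation_without_X}) with $\tilde U=V$, absorbing the covariate dependence into the conditional law of the threshold rather than into the function $q$. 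Your route is self-contained and makes the rank-invariance mechanism explicit (it is essentially the content of Corollary \ref{cor:obs_implication}(i) promoted to a lemma), at the cost of redoing the gap-interval construction and having to handle accumulation points of $m(\mathcal{Z})$ when the order type of $\mathcal{Z}$ is dense --- an edge case you correctly flag, and which the paper sidesteps by delegating the construction to the cited results. The (ii) $\Rightarrow$ (i) direction and the ``furthermore'' normalization via the conditional probability integral transform coincide with the paper's argument; note that both you and the paper's own proof establish $U^*\perp(X,Z)$ rather than the literal claim $(Y_1,Y_0,U^*)\perp(X,Z)$ in the theorem statement, which does not follow (and cannot, since $Y_1,Y_0$ may depend on $X$).
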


This representation result achieves separability between the instrument and covariates in the treatment choice process. The function $m$ ranks the values of the instrument. Moreover, this ranking is invariant to changes in the covariates and is identified up to an increasing transformation. We further explain this ranking in the next section.

The form of Equation (\ref{eqn:index_crossing}) is to emphasize the separation between the instrument $Z$ and the covariates $X$. Alternatively, we can define $\tilde{U} = q(X,U)$, and write the selection equation as
\begin{align} \label{eqn:representation_without_X}
    D_z = \mathbf{1}\{ m(z) \geqslant \tilde{U} \} \; w.p.1, 
\end{align}
where $(Y_1,Y_0,\tilde{U}) \perp Z \mid X$. Representation (\ref{eqn:representation_without_X}) is used in the proof of Corollary \ref{cor:obs_implication}. Note that the separation between $Z$ and $X$ in representation (\ref{eqn:index_crossing}) and (\ref{eqn:representation_uncond}) holds inside the indicator function, and it does not necessarily imply that propensity score is additively separable in $Z$ and $X$. For example, consider the simple treatment selection equation $\mathbf{1}\{Z + X \geq U\}$, where $U \mid (Z,X) \sim N(0,1)$. In this case, the propensity score is equal to $\pi(z,x) = \Phi(z+x)$, with $\Phi$ being the distribution function of the standard normal distribution. This particular propensity is not additively separable between its two arguments.\footnote{That being said, non-separabilities of $Z$ and $X$ in the propensity score may lead to a contradiction to the monotonicity assumption. For example, this can happen if the marginal effect of increasing $Z$ is positive for some $(X,Z)=(x,z)$ but negative for another value $(X,Z)=(x',z)$. For example, if the propensity score is $\pi(z,x) = \Phi(zx)$, then the monotonicity assumption would be violated if $x$ can take both positive and negative values.}

The intuition of the Theorem is explained along with the following proof, where we make use of the idea presented in \cite{vytlacil2006note}.

\begin{proof} [Proof of Theorem \ref{thm:representation}]

The direction $(ii) \implies (i)$ is obvious. For $(i) \implies (ii)$, consider fixing $X=x$ for any $x \in \mathcal{X}$, then apply the results by \cite{vytlacil2002independence}. Formally, let $(\Omega, \mathcal{B},P)$ be the probability space that underlies the random vector $(Y_1,Y_0, ,\{D_z:z \in \mathcal{Z}\},X,Z)$. Consider the partition $\Omega = \bigcup_{x\in\mathcal{X}}\Omega_x$, where $\Omega_x = \{\omega\in\Omega:X(\omega)=x\}$. The probability $P(\Omega_x)$ is non-zero as $\mathcal{X}$ is assumed to be countable. For each $x\in\mathcal{X}$, we construct the probability space $(\Omega_x,\mathcal{B}_x,P_x)$ where the $\sigma$-algebra
\[
\mathcal{B}_x = \left\{B\cap\Omega_x: B\in\mathcal{B}\right\} 
\]
and the probability measure 
\[P_x(B) = \frac{P(B\cap\Omega_x)}{P(\Omega_x)} \;\text{ for }\;B\in\mathcal{B}_x.
\]
Consider the random variable $\left.D_z^x = D_z\right|_{\Omega_x}$, which is the restriction of $D_z$ to the subdomain $\Omega_x$ Similarly define $Y_1^x,Y_0^x,Z^x$. It is not hard to see that the probability space and the random variables are well-defined and that $P_X(B)$ is the conditional probability of $B$ given $X$.\footnote{We are stating that the followings are true: (1) $(\Omega_x,\mathcal{B}_x,P_x)$ is a probability space, (2) $Y_1^x,Y_0^x,Z^x$ and $\{D_z^x:z \in \mathcal{Z}\}$ are $\mathcal{B}_x$-measurable, and (3) $P_X(B) = \mathbb{E}[\mathbf{1}_{B}|X] \text{ a.s. for } B \in \mathcal{B}$.} Then Assumption \ref{ass:CI} implies that for all $z \in \mathcal{Z}$, $Z^x \perp (Y_1^x,Y_0^x,D_z^x)$ under the probability measure $P_x$. Assumption \ref{ass:monotonicity} implies that for all $(z,z') \in \mathcal{Z}^2$, either $D_z^x \geq D_{z'}^x$ or $D_z^x \leq D_{z'}^x$. This means that the variables $(Y_1^x,Y_0^x,\{D_z^x:z \in \mathcal{Z}\},Z^x)$ satisfy the LATE assumptions defined by \cite{vytlacil2002independence} (namely assumptions L-1 and L-2 in that paper). By the result in that paper, an equivalent representation for $D_z^x$ is that there exists a non-trivial function $g$ and a random variable $U^x$, independent of $Z^x$, such that $D^x_z = \mathbf{1}\{ g(z,x) \geqslant U^x \}$. Define $U \equiv U^X $. By construction $U \perp Z \mid X$, and $D_z = D_z^X = \mathbf{1}\{ g(z,X) \geqslant U \}$.

The global monotonicity condition imposes further restrictions. It forbids the following situation: for some pairs $(z,z') \in \mathcal{Z}^2$ and $(x,x') \in \mathcal{X}^2$, we have $ g(z,x) > g(z',x)$ but $g(z,x') < g(z',x')$. This violation is depicted in Figure \ref{fig:violation}, where under $X=x$ we have compliers, but under $X=x'$ we have defiers.

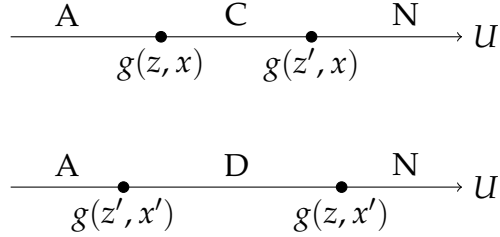
\begin{figure}[h] 
    \centering
    \caption{Violation of Assumption \ref{ass:monotonicity}}
    \label{fig:violation}
    \vspace{10pt}
    \begin{tikzpicture}
		\draw [->] (0,3) -- (6,3) node[ midway, above]{C} node[ very near start, above]{A} node[ very near end, above]{N}  node[right]{$U$};
		\filldraw[black] (2,3) circle (2pt) node[anchor=north]{$g(z,x)$};
		\filldraw[black] (4,3) circle (2pt) node[anchor=north]{$g(z',x)$};
		\draw [->] (0,1) -- (6,1) node[ midway, above]{D} node[ very near start, above]{A} node[ very near end, above]{N} node[right]{$U$};
		\filldraw[black] (1.5,1) circle (2pt) node[anchor=north]{$g(z',x')$};
		\filldraw[black] (4.4,1) circle (2pt) node[anchor=north]{$g(z,x')$};
\end{tikzpicture} \\
\vspace{10pt}
\small A: always taker, C: complier, D: defier, N: never taker.
\end{figure}

In fact, the monotonicity condition implies that $g$ satisfies the following property: for all pairs $(z,z')$, either $g(z,x) > g(z',x)$ for all $x$, or  $g(z,x) < g(z',x)$ for all $x$.
By Lemma 1 in \cite{vytlacil2006note}, this property implies that there exists a set of strictly increasing functions $\{h_x(\cdot):x \in \mathcal{X}\}$ and a function $m$ such that $g(z,x) = h_x(m(z))$. Thus, based on the fact that each $h_x$ is strictly increasing hence invertible, we can derive that
    $D_z = \mathbf{1}\{ g(z,X) \geqslant U \} = \mathbf{1}\{ h_X(m(z)) \geqslant U \} = \mathbf{1}\{ m(z) \geqslant h_X^{-1}(U) \}$ and establish the representation in (\ref{eqn:index_crossing}).\par
	For representation (\ref{eqn:representation_uncond}), define $U^*=F_{U|X}(U \mid X)$ and $q^*(x,u) = q(x,F_{U|X}^{-1}(u \mid x))$, where $F_{U|X}(\cdot \mid \cdot)$ is the conditional cumulative distribution function of $U$ given X.  We have
\begin{align*}
D_z 
&= \mathbf{1}\{ m(z) \geqslant q(X,U) \} \\
&= \mathbf{1}\{ m(z) \geqslant q(X,F_{U|X}^{-1} \circ F_{U|X}(U)) \}\\
&= \mathbf{1}\{ m(z) \geqslant q^*(X,U^*) \}.
\end{align*} 
$U^*$ is independent to $(X,Z)$ and distributed as $\text{Unif}[0,1]$ because
\begin{align*}
P(U^*\leqslant u \mid X=x,Z=z) 
&= P(F_{U|X}(U \mid X)\leqslant u \mid X=x,Z=z)\\
&= P(F_{U|X}(U \mid x)\leqslant u \mid X=x)\\
&= P(U\leqslant F^{-1}_{U|X}(u\mid x) \mid X=x)\\
&= F_{U|X}( F^{-1}_{U|X}(u \mid x) \mid x), \\
& = u, \text{ for all } u \in[0,1],
\end{align*}
where the second line holds as $U \perp Z \mid X$, and the fourth line holds by the definition of $F_{U|X}$.
\end{proof}

\section{Implications} \label{sec:implications}
 
The separability property between $Z$ and $X$ in the choice equation implies a rank-invariance property of the ranking of the instrument in terms of the propensity score 
$$\pi(z,x) \equiv P(D=1 \mid Z=z,X=x).$$
The following corollary also discusses the identification of the function $m$ from the propensity score.

\begin{corollary} [Observable Implications] \label{cor:obs_implication}
    Let Assumptions \ref{ass:CI} and \ref{ass:monotonicity} hold. 
    \begin{enumerate} [label = (\roman*)]
        \item The propensity score $\pi$ satisfies that for any $z,z' \in \mathcal{Z}$,
        \begin{align} \label{eqn:condition_ps}
            \pi(z,x) \geqslant \pi(z',x)  \text{  for some } x \implies \pi(z,x) \geqslant \pi(z',x) \text{  for all } x.
        \end{align}
        Further, using representation (\ref{eqn:representation_without_X}), if the CDF of $\tilde{U}$ is strictly increasing conditional on some value $x^*$, then the function $m$ can be ordinally identified as $m(z) = \pi(z,x^*)$.  Moreover, if the conditional CDF is strictly increasing for all $x\in\mathcal{X}$, then the above statement also holds when the weak inequalities in equation (\ref{eqn:condition_ps}) are replaced by strict inequalities (or equalities).

        \item The function $m$ is a sufficient index of the instrument $Z$ in the sense that
        \begin{align*}
            P(Y_j \in \mathcal{B} \mid X, Z, D=j) = P(Y_j \in \mathcal{B} \mid X,m(Z),D=j),
        \end{align*}
        for any measurable set $\mathcal{B}$ and $j \in \{0,1\}$. Let $g_1,g_0$ be nonnegative functions, then 
        \begin{align*}
            \mathbb{E} \left[ Dg_1(Y,X) \mid X, m(Z) = \mu \right]
        \end{align*}
        is weakly increasing
        in $\mu$ $(w.p.1)$ and 
        \begin{align*}
            \mathbb{E} \left[ (1-D)g_0(Y,X) \mid X, m(Z) = \mu \right]
        \end{align*}
        is weakly decreasing in $\mu$ $(w.p.1)$. These implications are testable when the CDF of $\tilde{U}$ is strictly increasing conditional on some value $x^*$, a case in which the function $m$ can be ordinally identified as the propensity score $\pi(z,x^*)$.
    \end{enumerate}
    
\end{corollary}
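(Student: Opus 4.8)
The plan is to route every claim through the latent-index representation of Theorem~\ref{thm:representation}, in the form (\ref{eqn:representation_without_X}), and to reduce each statement to a property of the conditional CDF of $\tilde U$. The key preliminary observation is that conditional independence (Assumption~\ref{ass:CI}) removes the conditioning on $Z$ inside the propensity score: since $D_z\perp Z\mid X$, I have $\pi(z,x)=P(D_z=1\mid X=x)=P(\tilde U\le m(z)\mid X=x)=F_{\tilde U\mid X=x}(m(z))$. Thus, for each fixed $x$, the propensity score is a monotone (CDF) transformation of the single index $m(z)$, and the whole corollary becomes an exercise in exploiting this factorization.

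For part (i), to establish the rank invariance (\ref{eqn:condition_ps}) I fix $z,z'$ and suppose $\pi(z,x)\ge\pi(z',x)$ for some $x$. By the relevance condition stated before the assumptions, namely $P(D_z=1\mid X=x)\ne P(D_{z'}=1\mid X=x)$, i.e.\ $\pi(z,x)\ne\pi(z',x)$, this inequality is strict, so $F_{\tilde U\mid X=x}(m(z))>F_{\tilde U\mid X=x}(m(z'))$, and monotonicity of the CDF forces $m(z)>m(z')$. Because $m$ does not depend on $x$, I feed this ordering of the index back through $F_{\tilde U\mid X=x'}$ at every other $x'$ to get $\pi(z,x')\ge\pi(z',x')$ (again strict by relevance). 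For the identification claim, if $F_{\tilde U\mid X=x^*}$ is strictly increasing then $z\mapsto\pi(z,x^*)=F_{\tilde U\mid X=x^*}(m(z))$ is a strictly increasing transformation of $m$, so $\pi(\cdot,x^*)$ realizes $m$ up to the increasing transformation under which $m$ was only ever pinned down.

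For part (ii), I use $(Y_1,Y_0,\tilde U)\perp Z\mid X$ together with the event descriptions $\{D=1\}=\{\tilde U\le m(Z)\}$ and $\{D=0\}=\{\tilde U>m(Z)\}$. Writing the conditional law of $Y_1$ given $(X=x,Z=z,D=1)$ as the ratio $P(Y_1\in\mathcal B,\tilde U\le m(z)\mid X=x)/P(\tilde U\le m(z)\mid X=x)$, where the dependence on $z$ has been erased by the independence except through $m(z)$, and checking that conditioning on $m(Z)=\mu$ instead of $Z=z$ produces the identical ratio (the $P(m(Z)=\mu\mid X=x)$ factors cancel), yields the sufficiency statement; the $j=0$ case is symmetric with $\tilde U>m(z)$. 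For the two monotonicity assertions, I note $D\,g_1(Y,X)=g_1(Y_1,X)\mathbf 1\{\tilde U\le m(Z)\}$, so conditioning on $m(Z)=\mu$ and again erasing the residual dependence on $Z$ gives $\mathbb E[g_1(Y_1,x)\mathbf 1\{\tilde U\le\mu\}\mid X=x]$; since $g_1\ge0$ and $\mu\mapsto\mathbf 1\{\tilde U\le\mu\}$ is pointwise nondecreasing, this is nondecreasing in $\mu$. The $(1-D)g_0$ term becomes $\mathbb E[g_0(Y_0,x)\mathbf 1\{\tilde U>\mu\}\mid X=x]$, nonincreasing in $\mu$ by the same logic.

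The main obstacle is logical rather than computational, and it lives in part (i): naively $\pi(z,x)\ge\pi(z',x)$ only yields $F_{\tilde U\mid X=x}(m(z))\ge F_{\tilde U\mid X=x}(m(z'))$, which, if the conditional CDF is flat between $m(z')$ and $m(z)$, need not give $m(z)\ge m(z')$ and could even coexist with $m(z)<m(z')$, breaking the propagation across covariate values. The relevance assumption is exactly what removes such ties, upgrading every weak propensity-score inequality to a strict one so that no flat-region ambiguity can survive. I would therefore be careful to invoke relevance twice: once to extract a strict ordering of $m$ from the hypothesis, and once to restore the weak inequality in the conclusion at the remaining covariate values.
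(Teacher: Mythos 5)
Your proof is correct and follows essentially the same route as the paper: both reduce everything to the factorization $\pi(z,x)=F_{\tilde U\mid X=x}(m(z))$ and to the event identity $\{D=1\}=\{\tilde U\leqslant m(Z)\}$ combined with $(Y_1,Y_0,\tilde U)\perp Z\mid X$. Your explicit use of the relevance condition to rule out flat-CDF ties in part (i) is a careful touch that the paper's terser proof glosses over, but it does not change the underlying argument.
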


This first implication means that the function $m$ provides an observable ordering of the instrument values by their strength of pushing individuals to take up the treatment. That is, in the CLATE model, we can rank the instrument values by their effectiveness of inducing individuals into the treatment status. This ordering remains invariant under different values of $X$ because the monotonicity is assumed to be global (Assumption \ref{ass:monotonicity}). However, we do not impose the ``normalization'' that a higher value of the instrument always leads to more treatment take-ups, so the function $m$ need not be increasing. 

The second implication uses the identified $m$ to derive a set of testable implications of the CLATE model. This set of testable implications is a refinement of the testable implications of the marginal treatment effect framework derived in \cite{heckman2005structural} as the role of Z is fully summarized by the function $m$.\footnote{Notice that notation ``$Z$'' in \cite{heckman2005structural} is different from ours as it represents the joint set of the instruments and covariates. By contrast, $Z$ only contains the excluded instruments in our paper. Accordingly, the set of testable implications we derive is also stronger in that $m$ only depends on the excluded instrument, which is a result of the monotonicity condition imposed by the CLATE model.} The testable implications are also analogous to those presented in Equation (3.3) in \cite{kitagawa2015test} except that, here, $m(Z)$, but not $Z$ itself, enters the conditioning set. That is, the testable implications we derived do not restrict the direction of the effect of $Z$ on the treatment take-up. The distinction appears as we do not explicitly assume that no defier exists as \cite{kitagawa2015test} does. We only assume that defiers and compliers can not both exist. That is, as stated in Assumption \ref{ass:monotonicity}, we leave the direction of monotonicity unspecified. \footnote{If the function $m$ were known and increasing, the testable implication in this paper would essentially reduce to Equation (3.3) in \cite{kitagawa2015test}, except that $Z$ can possibly be non-binary in our case.  Combined with the testing procedure proposed in Section 3.1 in his paper to handle multivalued instruments, we may likewise design a test for our implication.}

\begin{proof} [Proof of Corollary \ref{cor:obs_implication}]
    \begin{enumerate} [label = (\roman*)]
        \item From (\ref{eqn:representation_without_X}), 
        we have $\pi(Z,X) = P(m(Z) \geqslant \tilde{U} \mid Z,X) = F_{\tilde{U} \mid X}(m(Z))$, where $F_{\tilde{U}\mid X}$ denotes the conditional CDF of $\tilde{U}$ given $X$. The Condition (\ref{eqn:condition_ps}) on the propensity score is satisfied since $F_{\tilde{U} \mid X}$ is non-decreasing. When $F_{\tilde{U} \mid X=x^*}$ is strictly increasing, the ordinal information contained in $m(\cdot)$ is fully transformed into $\pi(\cdot,x^*)$.

        Now suppose that $F_{\tilde{U} \mid X=x^*}$ is strictly increasing for all $x$. As a result, the function $m$ is ordinally identified by $\pi(z,x)$ for any $x\in\mathcal{X}$. By the definition of being ordinally identified, we have 
	\[
	\pi(z,x) > \pi(z',x) \iff m(z) > m(z') \iff \pi(z,x') > \pi(z',x')
	\]
	and 
	\[
	\pi(z,x) = \pi(z',x) \iff m(z) = m(z') \iff \pi(z,x') = \pi(z',x')
	\]
for any $z,z'\in\mathcal{Z}$ and $x,x'\in\mathcal{X}$.

        \item Consider the case where $j=1$, the other case can be proved by symmetric arguments. By representation (\ref{eqn:index_crossing}), we have
        \begin{align*}
	 P(Y_1 \in \mathcal{B} \mid X=x, Z=z, D=1) & = P(Y_1 \in \mathcal{B} \mid X=x, Z=z, m(z)\geqslant q(x,U)) \\
            & = P(Y_1 \in \mathcal{B} \mid X=x,  m(z)\geqslant q(x,U)) \\
            & = P(Y_1 \in \mathcal{B} \mid X=x, m(Z)=m(z) ,m(z)\geqslant q(x,U)) \\
            & = P(Y_1 \in \mathcal{B} \mid X=x, m(Z)=m(z), m(Z)\geqslant q(X,U)) \\
            & = P(Y_1 \in \mathcal{B} \mid X=x, m(Z)=m(z), D=1),           
        \end{align*}
        where the second and third lines follow from the conditional independence assumption (Assumption \ref{ass:CI}), and the last line follow from the equivalence result.\footnote{We thank one of the anonymous referee for suggesting this proof to improve the clarity of the original arguments.} For the second assertion, let $\mu > \mu'$, then
        \begin{align*}
            &\mathbb{E} \left[ Dg_1(Y,X) \mid X, m(Z) = \mu \right] - \mathbb{E} \left[ Dg_1(Y,X) \mid X, m(Z) = \mu' \right] \\
            = & \mathbb{E} \left[ \mathbf{1}\{h(X,U) \leqslant \mu\}g_1(Y_1,X) \mid X, m(Z) = \mu \right] \\
            -& \mathbb{E} \left[ \mathbf{1}\{h(X,U) \leqslant \mu'\}g_1(Y_1,X) \mid X, m(Z) = \mu' \right] \\
            = & \mathbb{E} \left[ \mathbf{1}\{ \mu' < h(X,U) \leqslant \mu\}g_1(Y_1,X) \mid X \right] \geqslant 0.
        \end{align*}
        The case of $\mathbb{E} \left[ (1-D)g_0(Y,X) \mid X, m(Z) = \mu \right]$ is similar. When the function $m$ is ordinally identified as $m(z) = \pi(z,x^*)$ for some $x^*$, we can rewrite the implications as (i) the following equality
        \[P(Y_j \in \mathcal{B} \mid X, Z, D=j) = P(Y_j \in \mathcal{B} \mid X,\pi(Z,x^*),D=j),
        \]
	hold for any measurable set $\mathcal{B}$ and $j \in \{0,1\}$, (ii) the function
        \[       
         \mathbb{E} \left[ Dg_1(Y,X) \mid X, \pi(Z,x^*) = p \right]
         \] is weakly increasing for $p$ in the range of $\pi(Z,x^*)$, and (iii) the function
         \[
         \mathbb{E} \left[ (1-D)g_1(Y,X) \mid X, \pi(Z,x^*) = p \right]
         \]
         is weakly decreasing for $p$ in the range of $\pi(Z,x^*)$. These implications are testable as $\pi(Z,x^*)$ is identified by the observed propensity $P(D=1 \mid Z=z,X=x^*)$.
    \end{enumerate}
    
\end{proof}

\section{Ordered Treatment Levels} \label{sec:ordered}

This section extends the representation result in Section \ref{sec:main} to incorporate multiple ordered levels of treatment. The argument follows from the equivalence results in \cite{vytlacil2006ordered}. Let there be $K$ possible levels of treatment. Now the treatment $D$ takes values in an ordered set $\{1,\cdots,K\}$. The counterfactual treatment $D_z$'s are defined accordingly. The corresponding potential outcomes are denoted by $(Y_1,\cdots,Y_K)$. 

The CLATE assumptions are modified to incorporate the ordered multiplicity in treatment levels. Although we have a different definition of $D$, the statement of the monotonicity condition does not change.

\begin{manualtheorem}{\ref{ass:CI}'}\label{order_CI}
    $\left( \{D_z : z \in \mathcal{Z}\}, Y_1, \cdots, Y_K \right) \perp Z \mid X $.
\end{manualtheorem}

\begin{manualtheorem}{\ref{ass:monotonicity}'}\label{order_monotonicity}
    For any $(z,z') \in \mathcal{Z}^2$, either 
    $$ P(D_z \geqslant D_{z'} \mid X=x) =1 \text{, for almost all } x$$
    or 
    $$ P(D_z \leqslant D_{z'} \mid X=x) =1 \text{, for almost all } x.$$
\end{manualtheorem}

\begin{corollary} [Ordered Treatment Levels] \label{cor:order}
    The ordered CLATE model (Assumptions \ref{order_CI} and \ref{order_monotonicity}) is equivalent to the following statements. There exist a function $m$ and $K+1$ random variables $U_0, \cdots,U_K$ such that for $k=1,\cdots,K$,
    \begin{enumerate} [label = (\roman*)]
        \item $D_z = k \iff U_{k-1} \leqslant m(z) < U_{k}$,
        \item $Z \perp \left( U_1, \cdots,U_{K-1}, Y_1, \cdots, Y_{K} \right) \mid X $, 
        \item $U_0 = -\infty$, $U_K = \infty$, and $U_k \geqslant U_{k-1}$.
    \end{enumerate}
\end{corollary}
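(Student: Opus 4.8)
The plan is to mirror the proof of Theorem \ref{thm:representation}, replacing the binary-choice building block of \cite{vytlacil2002independence} with its ordered-choice analog from \cite{vytlacil2006ordered}, and then re-using the same separability argument driven by Lemma 1 of \cite{vytlacil2006note}. As with Theorem \ref{thm:representation}, I would treat the two directions separately, the representation-to-model direction being routine and the model-to-representation direction carrying the content.

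First I would dispatch the easy direction, that (i)--(iii) imply the ordered CLATE model. Conditional independence (Assumption \ref{order_CI}) is immediate: each $D_z$ is a deterministic function of $m(z)$ and the thresholds $(U_0,\dots,U_K)$, so by (ii) the whole collection $(\{D_z\},Y_1,\dots,Y_K)$ inherits independence from $Z$ given $X$. For monotonicity (Assumption \ref{order_monotonicity}), fix a pair $(z,z')$; since $m$ is a fixed function, either $m(z)\geqslant m(z')$ or $m(z)\leqslant m(z')$. Because the cutoffs satisfy $U_0\leqslant\cdots\leqslant U_K$, the map sending an index value to the selected level is a nondecreasing step function, so the larger of $m(z),m(z')$ lands in a weakly higher interval and $D_z,D_{z'}$ are ordered in the same direction for every realization, which is exactly Assumption \ref{order_monotonicity}.

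For the substantive direction I would proceed in three steps. (1) Fix $x$ and apply the equivalence result of \cite{vytlacil2006ordered} to the subpopulation $X=x$: the ordered-CLATE assumptions at fixed $x$ yield a single index $g(z,x)$ and random cutoffs $\{\nu_k^x\}$, independent of $Z$ given $X=x$, with $D_z^x=k \iff \nu_{k-1}^x\leqslant g(z,x)<\nu_k^x$, $\nu_0^x=-\infty$, and $\nu_K^x=\infty$; then patch across $x$ by indexing on the realized covariate, exactly as $U=U_X$ is formed in the proof of Theorem \ref{thm:representation}. (2) Translate the global monotonicity into a cross-$x$ constraint on $g$: since the selected level is a nondecreasing step function of the index, having $D_z\geqslant D_{z'}$ (resp. $\leqslant$) over the whole population forces $g(z,x)\geqslant g(z',x)$ (resp. $\leqslant$) for all $x$, and invoking the relevance/nontriviality assumption to rule out ties yields the dichotomy that for every $(z,z')$ either $g(z,x)>g(z',x)$ for all $x$ or $g(z,x)<g(z',x)$ for all $x$ --- the precise hypothesis of Lemma 1 in \cite{vytlacil2006note}. (3) Apply that lemma to obtain strictly increasing functions $\{h_x\}$ and a common function $m$ with $g(z,x)=h_x(m(z))$; substituting and using invertibility of each $h_x$ rewrites the cutoff inequalities as $h_x^{-1}(\nu_{k-1}^x)\leqslant m(z)<h_x^{-1}(\nu_k^x)$. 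Setting $U_k:=h_X^{-1}(\nu_k^X)$ delivers (i); (iii) follows from monotonicity of $h_x^{-1}$ applied to $\nu_{k-1}^x\leqslant\nu_k^x$ together with the boundary conventions; and (ii) follows because each $U_k$ is a function of $X$ and $\nu_k^X$, which is conditionally independent of $Z$ given $X$, combined with the outcome independence already contained in Assumption \ref{order_CI}.

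The main obstacle I anticipate is Step (2): verifying that the global monotonicity of the multi-valued $D_z$ reduces cleanly to the single-index comparison $g(z,x)$ versus $g(z',x)$, uniformly across $x$ and across all threshold realizations. In the binary case this is transparent because a single crossing determines treatment, whereas here one must argue that no configuration of the $K-1$ interior cutoffs can reverse the induced order of treatment levels once the order of indices is fixed, and that the nontriviality assumption guarantees the index order is actually detected rather than masked by coincident cutoffs. Once this reduction is in place, the remainder is a direct transcription of the separability argument used for Theorem \ref{thm:representation}.
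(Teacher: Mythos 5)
Your proof is correct, but it inverts the order of operations relative to the paper's own argument, so the two are genuinely different routes to the same place. The paper first collapses the ordered treatment into $K-1$ binary treatments $D^k_z=\mathbf{1}\{D_z>k\}$, applies Theorem \ref{thm:representation} to each one to obtain $K-1$ separable indices $m^k(z)$, and only then aligns these indices across treatment levels via Lemma 1 of \cite{vytlacil2006ordered}, which expresses every $m^k(z)$ as a non-decreasing transformation of the common quantity $d(z)=\mathbb{E}[D_z]$. You instead keep the treatment multi-valued, invoke the ordered representation of \cite{vytlacil2006ordered} conditionally on each $x$ to get a single index $g(z,x)$ with $x$-dependent thresholds $\nu^x_k$, and then separate $z$ from $x$ with Lemma 1 of \cite{vytlacil2006note} --- exactly the lemma driving Theorem \ref{thm:representation} --- before absorbing the $h_x^{-1}$ into the thresholds to define $U_k=h_X^{-1}(\nu^X_k)$. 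So the paper separates covariates first (inside each binary sub-problem) and aligns treatment levels second, while you align treatment levels first (inside each covariate cell) and separate covariates second. Your route has the virtue of being a near-verbatim transcription of the proof of Theorem \ref{thm:representation}, with the cross-$x$ dichotomy on $g$ derived from global monotonicity plus relevance in the same way; the obstacle you flag in Step (2) is real but resolves exactly as you suggest, since the level selected by an ordered set of cutoffs is a nondecreasing step function of the index, so $g(z,x)\leqslant g(z',x)$ forces $D^x_z\leqslant D^x_{z'}$ pointwise and relevance upgrades the weak inequality to the strict dichotomy Lemma 1 of \cite{vytlacil2006note} requires. The paper's route buys a cleaner reuse of Theorem \ref{thm:representation} as a black box but must then do the nontrivial work of showing the $K-1$ indices can be taken equal to one another; your route front-loads that alignment by borrowing the full ordered representation as the local building block. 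Both are valid at the (terse) level of rigor the paper itself adopts.
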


This is basically the conditional version of the representation result in \cite{vytlacil2006ordered}. The main point is that even though the random thresholds $U_1,\cdots,U_K$ covariates with $X$, the
latent index $m(Z)$ does not explicitly depend on $X$. Again, this is because Assumption \ref{order_monotonicity} requires that the direction of monotonicity has to be the same across all values of $X$. 

\begin{proof} [Proof of Corollary \ref{cor:order}]
    Proof of direction from the random thresholds model to the CLATE model can be done in the exact same way as in \cite{vytlacil2006ordered}. In particular, Assumption \ref{ass:CI}' is implied by item (ii). For Assumption \ref{ass:monotonicity}', suppose for some $(z,z') \in \mathcal{Z}^2$ and $x \in \mathcal{X}$, $P(D_z > D_{z'} \mid X) > 0$. This implies that $m(z) > m(z')$, which in turn implies that $D_z \geq D_{z'}$. This in fact means that $P(D_z \geq D_{z'} \mid X=x) =1$ for all $x \in \mathcal{X}$. Because otherwise, $P(D_z \geq D_{z'} \mid X=x) <1$ for some $x \in \mathcal{X}$, then by the law of total probability, 
    $$P(D_z \geq D_{z'}) = \sum_{x \in \mathcal{X}} P(X=x) P(D_z \geq D_{z'} \mid X=x)<1.$$
    
    For the other direction, define $D_z^k = \mathbf{1}\{D_z > k\}$. Then each $D_z^K$ is a binary treatment whose representation can be analyzed by Theorem \ref{thm:representation}. So $D_z^k = \mathbf{1}\{m^k(z) \geqslant \tilde{U}^k\}$. For any $z,z' \in \mathcal{Z}$, $m^k(z) \geqslant m^k(z')$ implies that $D_z \geqslant D_{z'}$ by monotonicity. Let $d(z) = \mathbb{E}\left[ D_z \right] $. Then by Lemma 1 in \cite{vytlacil2006ordered}, $m^k(z) = g^k(d(z))$ for some non-decreasing $g^k(\cdot)$. The rest of the proof follows from that paper.
\end{proof}

\section{Conclusion}

This paper shows that the CLATE model has a latent index representation in which the instrument and the covariates are separable in the treatment choice equation. On the theoretical side, the result more rigorously links the CLATE model to the latent index representation when covariates are present. On the practical side, the result establishes conditions when methods from the two pieces of literature can be used interchangeably. For example, one can employ the nonparametric estimator in \cite{frolich2007nonparametric} as robustness checks for the structural estimates in selection models. For future works, one can consider extending this result into the unordered monotonicity model \citep{heckman2018unordered}.

\bibliographystyle{apalike}
\bibliography{references.bib}

\end{document}